\documentclass[12pt]{amsart}
\usepackage[foot]{amsaddr}
\usepackage{amsmath}
\usepackage{amsfonts}
\usepackage{amssymb}
\usepackage{graphicx}                                                                                               
\usepackage{subcaption}    
\usepackage[top=1.5in,bottom=1.5in,left=1.5in,right=1.5in]{geometry}
\usepackage[shortlabels]{enumitem}

\usepackage{bbm}
\newtheorem{thm}{Theorem}

\newtheorem{lemma}[thm]{Lemma}
\newtheorem{prop}[thm]{Proposition}

\newtheorem{defn}[thm]{Definition}

\newcommand{\ket}[1]{|#1\rangle}
\newcommand{\bra}[1]{\langle#1|}
\newcommand{\HH}{\mathcal{H}}
\newcommand{\NN}{\mathbb{N}}
\newcommand{\mcA}{\mathcal{A}}
\newcommand{\mcB}{\mathcal{B}}
\newcommand{\ZZ}{\mathbb{Z}}
\newcommand{\CC}{\mathbb{C}}
\newcommand{\gi}{e} 
\newcommand{\I}{\mathbbm{1}} 

\begin{document}

\title[Commuting-operator strategies for linear system games]{Perfect commuting-operator strategies \\ for linear system games}

\author{Richard Cleve$^*$}
\address{${}^* {}^\dagger {}^\ddagger$Institute for Quantum Computing, University of Waterloo, Canada}
\address{${}^* {}^\dagger$School of Computer Science, University of Waterloo, Canada}
\address{${}^*$ Canadian Institute for Advanced Research}

\author{Li Liu$^\dagger$}

\author{William Slofstra{$^\ddagger$}}
\email{\{cleve,l47liu,weslofst\}@uwaterloo.ca}


\begin{abstract}
    Linear system games are a generalization of Mermin's magic square game
    introduced by Cleve and Mittal. They show that perfect strategies for
    linear system games in the tensor-product model of entanglement correspond
    to finite-dimensional operator solutions of a certain set of
    non-commutative equations. We investigate linear system games in the
    \emph{commuting-operator} model of entanglement, where Alice and Bob's
    measurement operators act on a joint Hilbert space, and Alice's operators
    must commute with Bob's operators.  We show that perfect strategies in this
    model correspond to possibly-infinite-dimensional operator solutions of the
    non-commutative equations. The proof is based around a
    finitely-presented group associated to the linear system which arises from the non-commutative equations.
\end{abstract}

\maketitle

\section{Introduction}

Mermin~\cite{Mermin1990} implicitly considers a non-local game that is
sometimes called the \textit{magic square game} (see
also~\cite{Peres1990,Mermin1993,Aravand2004,CleveH+2004}). This game is based
around a system of linear equations over $\ZZ_2$ with nine variables and six
equations. In the game, Alice receives as input one of the six equations, and
Bob receives as input one of the variables from the same equation. Without
communicating with each other, Alice must output an assignment of the variables
in her equation, and Bob must output an assignment of his variable. The players
\textit{win} if and only if Alice's assignment satisfies her equation and their
assignments are consistent in the common variable. Remarkably, Alice and Bob
can always win Mermin's game if they use entanglement; there is no way to
achieve this without entanglement. 

Cleve and Mittal~\cite{CleveM2012} investigate 
the general case of games based
on binary linear systems%
\footnote{In fact, they consider a more general
scenario called \textit{binary constraint system games}, where each equation can
be based on an arbitrary boolean function of inputs.} of the form $Mx = b$,
where $M \in \ZZ_2^{m \times n}$ and $b \in \ZZ_2^m$. A solution of such a
system is a vector $x \in \ZZ_2^n$ such that $Mx=b$. It is 
convenient to
write these equations in multiplicative form, so a vector $x \in \{\pm 1\}^n$
satisfies equation $\ell$ if and only if 
\begin{equation*}
    x_1^{M_{\ell,1}} x_2^{M_{\ell,2}} \cdots x_n^{M_{\ell,n}} = (-1)^{b_{\ell}}.
\end{equation*} 
An equivalent way of writing equation $\ell$ is 
\begin{equation*}
    x_{k_1} x_{k_2} \dots x_{k_r} = (-1)^{b_{\ell}},
\end{equation*} 
where $V_{\ell} = \{k_1,k_2,\dots,k_r\} = \{ 1 \leq k \leq n : M_{\ell,k} =
1\}$ is the set of indices of variables in equation $\ell$. The non-local game
associated with a binary linear system $Mx=b$ is defined similarly to that of
the magic square game. A \textit{classical strategy} is one where Alice and Bob
do not share entanglement.  It can be shown that $Mx=b$ has a perfect classical
strategy (i.e., a strategy with success probability 1) if and only if the
system of equations has a solution.

An entangled quantum strategy is a strategy in which Alice and Bob share an
entangled quantum state $\ket{\psi}$. In the tensor-product model, $\ket{\psi}$
is a bipartite state in a tensor product $\HH_A \otimes \HH_B$, and Alice and
Bob's measurements of this state are modeled as observables on $\HH_A$ and
$\HH_B$ respectively. It is shown in~\cite{CleveM2012} that a binary linear system game
has a perfect entangled strategy in the tensor-product model if and only if the
linear system has a finite-dimensional operator solution in the following sense:

\begin{defn}[Operator solution of binary linear system]\label{def:operator-solution}
    An {\em operator solution} to a binary linear system $Mx=b$ is a sequence
    of bounded self-adjoint operators $A_1, \dots, A_n$ on a Hilbert space $\HH$ such that:
    \begin{enumerate}[(a)]
        \item $A_i^2 = \I$ (that is, $A_i$ is a binary observable) for all $1 \leq i \leq n$.
        \item If $x_i$ and $x_j$ appear in the same equation (i.e., $i, j \in V_{\ell}$
            for some $1 \leq \ell \leq m$) then $A_i$ and $A_j$ commute (we call this
            {\em local compatibility}).
        \item For each equation of the form $x_{k_1} x_{k_2} \dots x_{k_r} = (-1)^{b_l}$,
            the observables satisfy 
            \begin{equation*}
                A_{k_1} A_{k_2} \cdots A_{k_r} = (-1)^{b_{\ell}}\I
            \end{equation*} 
            (we call this {\em constraint satisfaction}).
    \end{enumerate}
    A {\em finite dimensional operator solution} to a binary linear system $Mx=b$ is
    an operator solution in which the Hilbert space $\HH$ is finite dimensional.
\end{defn}
The term local compatibility comes from quantum mechanics, where two
observables commute if and only if they are compatible in the sense that they
represent quantities which can be measured (or known) simultaneously. It is
noteworthy that the result of \cite{CleveM2012} applies even when the Hilbert
spaces $\HH_A$ and $\HH_B$ are allowed to be infinite dimensional; in this
case, the operator solutions will still be finite dimensional.

In this paper we are interested in the \emph{commuting operator model} for
entanglement, in which $\ket{\psi}$ belongs to a joint Hilbert space $\HH$, and
Alice and Bob's measurements are modeled as observables on $\HH$ with the
property that Alice's observables commute with Bob's observables. This
model---which clearly subsumes the tensor-product model---is used in algebraic
quantum field theory. For any non-local game, a finite-dimensional strategy in
the commuting-operator model can be converted into a strategy in the tensor
product model, but the precise relationship between the tensor-product model
and the commuting-operator model is unknown in general. We refer to
\cite{Tsirelson1993, scholz2008,junge2011,Fritz2012} for more discussion.

The main result of our paper is that a binary linear system game has a perfect
entangled strategy in the commuting operator model if and only the linear
system has a (possibly-infinite-dimensional) operator solution. As is typical
with results of this type (compare for instance \cite[Proposition
5.11]{PaulsenSSTW14}), the main difficulty arises in showing that an operator
solution can be turned into a perfect strategy. In particular, an operator
solution does not come with an entangled state. For this part of the proof, we
make use of the fact that the relations for operator solutions in Definition
\ref{def:operator-solution} resemble (aside from the appearance of the scalar $(-1)$) the
relations of a group presentation. If we represent $(-1)$ by a new
variable $J$, we get a finitely-presented group $\Gamma$, which we call the
\emph{solution group}. We can then construct a tracial state on the group algebra 
of $\Gamma$ to use as our entangled state. 

We do not know of any computational procedure that takes a description of a
binary linear system $Mx=b$ as input and determines whether or not the game has
a perfect entangled strategy. For tensor-product strategies, the
characterization of perfect strategies in \cite{CleveM2012} can be used to
certify the existence of a perfect tensor-product strategy, but cannot certify
the non-existence of a perfect strategy. Interestingly, the situation seems to
be reversed for commuting-operator strategies. We discuss this in some concluding
remarks at the end of the paper. 

All the results in this paper generalize to linear systems over $\ZZ_p$, $p$ a
prime. For simplicity, we concentrate on the case of binary linear systems
throughout. The generalization to arbitrary primes $p$ is briefly explained in
the concluding remarks as well.

\section{Main results}

We now make some of the definitions from the introduction precise, starting
with the definition of a linear system game. 
\begin{defn}\label{def:nonlocal}
    Let $Mx=b$ be a binary linear system, so $M \in \ZZ_{2}^{m \times
    n}$ and $b \in \ZZ_{2}^m$. In the associated linear system game, Alice
    receives as input $s \in \{1,\ldots,m\}$, and Bob receives $t \in
    \{1,\ldots,n\}$, where $M_{s,t} =1$. Alice outputs an assignment to the
    variables in equation $s$, and Bob outputs a bit. Alice and Bob win
    if Alice's assignment satisfies equation $s$ and Alice's assignment to
    variable $x_t$ is the same as Bob's output bit.
\end{defn}
We postpone the definition of commuting-operator strategies for linear system
games to the following section. The next step is to define the solution group.
\begin{defn}[Solution group of a binary linear system]\label{def:solutiongroup}
    The \emph{solution group} of a binary linear system $Mx=b$ is the group
    $\Gamma$ generated by $g_1, \dots, g_n$ and $J$ satisfying the following
    relations (where $\gi$ is the group identity, and $[a,b] = a b a^{-1}
    b^{-1}$ is the group commutator):
    \begin{enumerate}[(a)]
        \item $g_i^2 = \gi$ for all $1\leq i \leq n$, and $J^2 = \gi$ (generators are involutions).
        \item $[g_i,J] = \gi$ for all $1 \leq i \leq n$ ($J$ commutes with each generator).
        \item If $x_i$ and $x_j$ appear in the same equation (i.e., $i, j \in
            V_{\ell}$ for some $\ell$) then $[g_i,g_j] = \gi$ (local compatibility).
        \item $g_1^{M_{\ell 1}} g_2^{M_{\ell 2}} \cdots g_{n}^{M_{\ell n}} = J^{b_{\ell}}$
            for all $1 \leq \ell \leq m$ (constraint satisfaction).
    \end{enumerate}
\end{defn}
As in the introduction, the last relation can be written as 
\begin{equation*}
    \prod_{i \in V_{\ell}} g_i = g_{k_1} \cdots g_{k_r},
\end{equation*}
where $V_{\ell} = \{k_1,\ldots,k_r\}$ are the indices of variables in equation
$\ell$.

We can now state our main result:
\begin{thm}\label{thm:main}
    Let $Mx=b$ be a binary linear system. Then the following statements are equivalent:
    \begin{enumerate}
        \item There is a perfect commuting-operator strategy for the non-local game associated to $Mx=b$.
        \item There is an operator solution for $Mx=b$ (possibly on an infinite-dimensional Hilbert space).
        \item The solution group for $Mx=b$ has the property that $J \neq \gi$.
    \end{enumerate}
\end{thm}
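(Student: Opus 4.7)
My plan is to prove the cycle $(2) \Rightarrow (3) \Rightarrow (1) \Rightarrow (2)$; the main obstacle is $(3) \Rightarrow (1)$, since there the hypothesis $J \neq \gi$ provides neither a Hilbert space nor an entangled state at the outset. The direction $(2) \Rightarrow (3)$ is short: given an operator solution $A_1, \dots, A_n$ on a Hilbert space $\HH$, the assignment $g_i \mapsto A_i$, $J \mapsto -\I$ respects every relation of $\Gamma$ from Definition~\ref{def:solutiongroup}, so it extends to a group homomorphism $\Gamma \to U(\HH)$ sending $J$ to $-\I \neq \I$; hence $J \neq \gi$ in $\Gamma$.

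For $(1) \Rightarrow (2)$: from a perfect commuting-operator strategy with state $\ket\psi$, Alice's commuting binary observables $\{\tilde A_{s,i}\}_{i \in V_s}$, and Bob's binary observables $\tilde B_t$, winning with probability one forces, on $\ket\psi$, the consistency relation $\tilde A_{s,t}\ket\psi = \tilde B_t\ket\psi$ for $t \in V_s$ and the equation-satisfaction relation $\prod_{i \in V_s}\tilde A_{s,i}\ket\psi = (-1)^{b_s}\ket\psi$. I will set $A_i := \tilde B_i$ and pass to the closed cyclic subspace $\HH_0 := \overline{\mcA \ket\psi}$, where $\mcA$ is the $*$-algebra generated by $A_1, \dots, A_n$. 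Since each $\tilde A_{s,i}$ commutes with every $\tilde B_j$, iterating consistency and Alice--Bob commutation yields
\[
    \tilde B_{i_1}\cdots \tilde B_{i_r}\ket\psi \;=\; \tilde A_{s_r,i_r}\cdots \tilde A_{s_1,i_1}\ket\psi
\]
for any choice of equations $s_l \ni i_l$. Picking $s_l = \ell$ for all $i_l \in V_\ell$, Alice's exact local compatibility and exact equation satisfaction within $\ell$ transfer to the corresponding operator identities for $A_i|_{\HH_0}$, producing the required (possibly infinite-dimensional) operator solution.

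The main step is $(3) \Rightarrow (1)$, which I will handle via the regular representation. Let $L_g, R_g$ denote left and right multiplication on $\ell^2(\Gamma)$, acting as $L_g\ket h = \ket{gh}$ and $R_g\ket h = \ket{hg^{-1}}$; these are self-adjoint unitaries whenever $g^2 = \gi$, and $[L_g, R_h] = 0$ always. Since $J$ is central in $\Gamma$ with $J^2 = \gi$, $L_J$ commutes with every $L_g$ and $R_g$ and squares to $\I$. The hypothesis $J \neq \gi$ forces $L_J \neq \I$, so the $(-1)$-eigenspace
\[
    \HH^- := \{\xi \in \ell^2(\Gamma) : L_J\xi = -\xi\}
\]
is a nonzero closed subspace invariant under every $L_g$ and $R_g$. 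I take the shared state $\ket\psi := \sqq(\ket\gi - \ket J) \in \HH^-$, Alice's observable for variable $i$ as $A_i := L_{g_i}|_{\HH^-}$, and Bob's observable for variable $t$ as $B_t := R_{g_t}|_{\HH^-}$; on input $s$, Alice jointly measures $\{A_i\}_{i\in V_s}$, which is well-defined by relation (c) of Definition~\ref{def:solutiongroup}.

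Perfect play then follows by direct calculation. Constraint satisfaction reads
\[
    \prod_{i\in V_s} A_i \;=\; L_{\prod_{i\in V_s} g_i}\big|_{\HH^-} \;=\; L_{J^{b_s}}\big|_{\HH^-} \;=\; (-1)^{b_s}\I,
\]
so Alice's joint outcomes always satisfy her equation. For consistency, $A_tB_t\ket h = \ket{g_t h g_t^{-1}}$ fixes both $\ket\gi$ and $\ket J$ (the latter using centrality of $J$), whence $A_tB_t\ket\psi = \ket\psi$; since $A_t, B_t$ are commuting $\pm 1$-observables, this is equivalent to saying Alice's outcome for $x_t$ matches Bob's outcome with probability one. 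This constructs a perfect commuting-operator strategy, closing the cycle.
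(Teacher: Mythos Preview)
Your argument is correct and follows the paper's approach closely: the same cycle $(2)\Rightarrow(3)\Rightarrow(1)\Rightarrow(2)$, the same representation $g_i\mapsto A_i$, $J\mapsto-\I$ for $(2)\Rightarrow(3)$, and the same left/right regular representation on $\ell^2(\Gamma)$ with state $\sqq(\ket\gi-\ket J)$ for $(3)\Rightarrow(1)$. Two small variations are worth noting. For $(1)\Rightarrow(2)$ you build the operator solution from Bob's observables $\tilde B_i$ restricted to the cyclic subspace they generate, whereas the paper restricts Alice's observables $A_i^{(\ell)}$ to $\overline{\mcA\ket\psi}$ and must then verify that the restriction is independent of $\ell$; your choice sidesteps that check. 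For $(3)\Rightarrow(1)$ you pass to the $(-1)$-eigenspace $\HH^-$ of $L_J$, so that constraint satisfaction becomes an operator identity $\prod_{i\in V_s}A_i=(-1)^{b_s}\I$ on $\HH^-$; the paper stays on all of $\ell^2(\Gamma)$ and simply checks $\prod_{i\in V_\ell}L_{g_i}\ket\psi=(-1)^{b_\ell}\ket\psi$ directly. Neither variation changes the substance of the proof.
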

The proof of Theorem \ref{thm:main} is given in the next section. For comparison,
we note that the main result of \cite{CleveM2012} can also be phrased using the
solution group.
\begin{thm}[\cite{CleveM2012}]\label{thm:compare}
    Let $Mx=b$ be a binary linear system. Then the following statements are equivalent:
    \begin{enumerate}
        \item There is a perfect tensor-product strategy for the non-local game associated to $Mx=b$.
        \item There is a finite-dimensional operator solution for $Mx=b$. 
        \item The solution group for $Mx=b$ has a finite-dimensional representation for which $J \neq \gi$.
    \end{enumerate}
\end{thm}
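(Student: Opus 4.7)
The plan is to assemble the three-way equivalence by combining the original Cleve--Mittal theorem \cite{CleveM2012}, which supplies $(1) \iff (2)$, with a short algebraic translation giving $(2) \iff (3)$; the latter is where the reformulation in terms of the solution group $\Gamma$ earns its place. I would treat $(1) \iff (2)$ as a black box: its harder direction (extracting a finite-dimensional operator solution from a perfect tensor-product strategy) is proved in \cite{CleveM2012} by reading off Bob's observables and using the rigidity that perfect correlations impose on the shared entangled state, while the easier direction (building a strategy from an operator solution) uses a maximally entangled state on $\HH \otimes \HH$, with Alice performing the joint measurement of the commuting observables in her equation and Bob performing the transpose of the corresponding variable observable.

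For $(2) \Rightarrow (3)$: given an operator solution $A_1,\ldots,A_n$ on a finite-dimensional Hilbert space $\HH$, I would set $\pi(g_i) := A_i$ and $\pi(J) := -\I$ and check the defining relations of $\Gamma$ from Definition \ref{def:solutiongroup} one by one: (a) $A_i^2 = \I$ and $(-\I)^2 = \I$; (b) $-\I$ is central; (c) local compatibility gives $A_i A_j = A_j A_i$ whenever $i,j \in V_\ell$; (d) constraint satisfaction yields $A_{k_1}\cdots A_{k_r} = (-1)^{b_\ell}\I = (-\I)^{b_\ell}$. Since each $A_i$ is self-adjoint with $A_i^2 = \I$ and therefore unitary, $\pi$ extends to a finite-dimensional unitary representation of $\Gamma$ with $\pi(J) = -\I \neq \I$.

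For $(3) \Rightarrow (2)$: suppose $\pi : \Gamma \to U(\HH)$ is a finite-dimensional unitary representation with $\pi(J) \neq \I$. Relations (a) and (b) make $\pi(J)$ a central self-adjoint unitary with spectrum in $\{+1,-1\}$; because $\pi(J) \neq \I$, its $(-1)$-eigenspace $V_- \subseteq \HH$ is nontrivial, and by centrality $V_-$ is invariant under every $\pi(g_i)$. Restricting $\pi$ to $V_-$ makes $\pi(J)$ act as $-\I$, and the operators $A_i := \pi(g_i)|_{V_-}$ are self-adjoint involutions, locally compatible by (c), and satisfy $A_{k_1}\cdots A_{k_r} = \pi(J)^{b_\ell}|_{V_-} = (-1)^{b_\ell}\I$ by (d). This is the required finite-dimensional operator solution.

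The main interpretive subtlety is in clause (3): the argument above silently reads \emph{representation} as unitary representation, so that each $\pi(g_i)$ is automatically self-adjoint and hence directly usable as an observable in Definition \ref{def:operator-solution}. This is the natural convention in the quantum-mechanical setting; without it, one would first have to conjugate a general linear representation of a finitely-presented group generated by involutions into unitary form, which is not automatic when the image $\pi(\Gamma)$ need not admit an invariant inner product. Aside from this point the $(2) \iff (3)$ translation is purely algebraic and straightforward, so all the genuine analytic content of Theorem \ref{thm:compare} is packaged inside the cited $(1) \iff (2)$ step from \cite{CleveM2012}.
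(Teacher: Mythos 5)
Your proposal is correct and matches the paper's intent exactly: the paper gives no proof of Theorem~\ref{thm:compare}, deferring $(1)\Leftrightarrow(2)$ to \cite{CleveM2012} and remarking only that "the equivalence with condition (3) is straightforward," and your $(2)\Leftrightarrow(3)$ argument (mapping $g_i\mapsto A_i$, $J\mapsto-\I$ in one direction, and restricting a unitary representation to the $(-1)$-eigenspace of the central involution $\pi(J)$ in the other) is precisely the straightforward translation being alluded to, mirroring Lemma~\ref{lem:twoimpliesthree}. Your observation that "representation" must be read as unitary representation is a fair and worthwhile clarification of the paper's implicit convention.
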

Although the solution group is not mentioned explicitly in \cite{CleveM2012},
the equivalence with condition (3) is straightforward. The requirement in
\cite{CleveM2012} that the Hilbert spaces $\HH_A$ and $\HH_B$ be separable can
also be dropped, since every entangled state in $\HH_A \otimes \HH_B$ can be
written as 
\begin{equation*}
    \sum_{k=1}^{\infty} \alpha_k \ket{u_k} \otimes \ket{v_k}
\end{equation*}
for some orthonormal sets $\{ \ket{u_k} : k \in \NN\} \subset \HH_A$ and
$\{\ket{v_k} : k \in \NN\} \subset \HH_B$. We thank Vern Paulsen for pointing
this out. 

\section{Proofs}

To prove Theorem \ref{thm:main}, we start by looking at commuting-operator
strategies for linear system games. It is straight-forward (see for instance
\cite{CleveM2012}) that Alice's and Bob's measurements in such a strategy can
be represented by families of binary observables
\begin{equation*}
    \{A_i^{(\ell)} : 1 \leq \ell \leq m, i \in V_{\ell}\} \text{ and } \{B_i : 1 \leq i \leq n \}
\end{equation*}
respectively, where $A_i^{\ell}$ is the observable for Alice's assignment to
variable $x_i$ in equation $\ell$, and $B_j$ is the observable for Bob's
assignment to variable $x_j$. Thus we can formally define commuting-operator
strategies as follows:
\begin{defn}\label{def:strat}
    Let $Mx=b$ be an $m\times n$ binary linear system. A \emph{commuting
    operator strategy} for (the game associated to) $Mx=b$ consists of a Hilbert
    space $\mathcal{H}$, a state $\ket\psi\in \HH$, and two collections
    $\{A_{i}^{(\ell)} : 1 \leq \ell \leq m, i\in V_\ell \}$ and
    $\{B_j : 1 \leq j \leq n\}$ of self-adjoint operators on $\HH$ such that:
    \begin{enumerate}[(a)]
        \item $\bigl(A_i^{(\ell)}\bigr)^2 = B_j^2 = \I$ for all 
            $1 \leq \ell \leq m$, $i \in V_{\ell}$, and $1 \leq j \leq n$.
            ($A_i^{(\ell)}$ and $B_j$ are binary observables).
        \item $A_i^{\ell} B_j = B_j A_i^{(\ell)}$ for all 
            $1 \leq \ell \leq m$, $i \in V_{\ell}$, and $1 \leq j \leq n$.
            (Alice's operators commute with Bob's operators).
        \item $A_{i}^{(\ell)}A_{j}^{(\ell)} = A_{j}^{(\ell)}A_{i}^{(\ell)}$
            for all $1 \leq \ell \leq m$ and $i,j \in V_{\ell}$. 
            (local compatibility).
    \end{enumerate}
\end{defn}
The local compatibility requirement comes from the fact that Alice must measure
the observables $A^{(\ell)}_i$, $i \in V_{\ell}$, simultaneously. Using this
definition, we can identify perfect strategies as follows:
\begin{prop}\label{prop:perfectstrat}
    A commuting-operator strategy $\left( \HH, \ket{\psi}, \bigl\{A_i^{(\ell)}\bigr\}, \bigl\{B_j\bigr\} \right)$
    is perfect if and only if
    \begin{enumerate}[(1)]
        \item $A_i^{(\ell)} \ket\psi = B_i \ket\psi$ for all $1 \leq \ell \leq m$
            and $i \in V_{\ell}$ (consistency between Alice and Bob), and
        \item $\prod_{i\in V_\ell} A_{i}^{(\ell)}\ket\psi = (-1)^{b_\ell}\ket\psi$ 
            for all $1 \leq \ell \leq m$ (constraint satisfaction).
    \end{enumerate}
\end{prop}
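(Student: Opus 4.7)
The plan is to express the winning probability on each input as an expectation of joint projectors, then decompose the losing event into two disjoint sub-events corresponding exactly to conditions (1) and (2).

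Using local compatibility, Alice's commuting observables $\{A_i^{(\ell)} : i \in V_\ell\}$ admit a joint spectral decomposition with projectors $P^{(\ell)}_{(a_i)} = \prod_{i \in V_\ell} \tfrac{1}{2}(\I + a_i A_i^{(\ell)})$ for $(a_i) \in \{\pm 1\}^{V_\ell}$, and Bob's measurement for input $j$ has projectors $Q^{(j)}_b = \tfrac{1}{2}(\I + b B_j)$. Because Alice's operators commute with Bob's, these projectors commute, and the joint outcome probability on input $(s,t)$ with $M_{s,t}=1$ is $\bra\psi P^{(s)}_{(a_i)} Q^{(t)}_b \ket\psi$. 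Marginalizing Bob and using the spectral identity $\sum_{(a_i)} (\prod_i a_i)\, P^{(\ell)}_{(a_i)} = \prod_{i \in V_\ell} A_i^{(\ell)}$, the probability that Alice's assignment violates equation $s$ evaluates to $\bra\psi \tfrac{1}{2}(\I - (-1)^{b_s} \prod_{i \in V_s} A_i^{(s)}) \ket\psi$, and a direct two-by-two sum shows that the inconsistency probability $P(a_t \neq b)$ equals $\bra\psi \tfrac{1}{2}(\I - A_t^{(s)} B_t) \ket\psi$.

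Next, decompose the losing event for input $(s,t)$ into the disjoint union of (A) Alice's assignment violating equation $s$, and (B) her assignment satisfying equation $s$ while $a_t \neq b$. Then $P(\text{lose}) = P(A) + P(B)$ by disjointness, and since both terms are non-negative, $P(\text{lose})=0$ iff both vanish; furthermore, given $P(A)=0$ one has $P(a_t \neq b) = P(A \cap \{a_t\neq b\}) + P(B) = P(B)$, so perfection of the strategy is equivalent to the two expectations displayed above vanishing for every valid $(s,t)$. Each is an expectation of a projector on $\ket\psi$ — the second being a projector precisely because $A_t^{(s)}$ and $B_t$ commute and square to $\I$ — so it vanishes iff the projector annihilates $\ket\psi$. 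This yields $\prod_{i \in V_\ell} A_i^{(\ell)} \ket\psi = (-1)^{b_\ell} \ket\psi$ (condition (2)) and $A_t^{(\ell)} B_t \ket\psi = \ket\psi$, and the latter is equivalent via $B_t^2 = \I$ to $A_t^{(\ell)} \ket\psi = B_t \ket\psi$ (condition (1)). Quantifying over all $(s,t)$ with $M_{s,t}=1$ gives both directions.

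The only slightly delicate step is the disjointness bookkeeping relating $P(\text{lose})$, $P(A)$, and $P(a_t \neq b)$; once this is verified, the remainder is routine spectral-projector manipulation on a bipartite commuting operator system, so I do not anticipate a genuine obstacle.
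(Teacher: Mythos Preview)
Your proposal is correct and follows essentially the same route as the paper: reduce perfection to the simultaneous vanishing of two expectations (one for Alice--Bob consistency, one for constraint satisfaction), then convert each expectation into a state identity. The paper's version is more compact---it observes directly that $A_i^{(\ell)} B_i$ and $(-1)^{b_\ell}\prod_{i\in V_\ell} A_i^{(\ell)}$ are unitary, so $\bra\psi U\ket\psi = 1$ forces $U\ket\psi = \ket\psi$ by Cauchy--Schwarz, which bypasses both the explicit projector decomposition and the $P(A)/P(B)$ bookkeeping you carry out.
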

\begin{proof}
    Alice's output is always consistent with Bob's if and only if
    \begin{equation*}
        \bra\psi A_i^{(\ell)} B_i \ket\psi = 1
    \end{equation*}
    for all $1 \leq \ell \leq m$ and $i \in V_{\ell}$. But $A_i^{(\ell)} B_i$
    is the product of two unitary operators, and hence is unitary. Since $\ket\psi$
    is a unit vector, the above equation holds if and only if
    \begin{equation*}
        A_i^{(\ell)} B_i \ket\psi = \ket \psi.
    \end{equation*}
    Since $A_i^{(\ell)}$ is an involution, this equation is equivalent to the identity
    in part~(1) of the proposition.

    Similarly, Alice's assignment for equation $\ell$ is always a satisfying assignment if and only if
    \begin{equation*}
        \bra\psi (-1)^{b_\ell} \prod_{i\in V_{\ell}} A_i^{(\ell)} \ket \psi = 1.
    \end{equation*}
    Again, $(-1)^{b_\ell} \prod_{i\in V_{\ell}} A_i^{(\ell)}$ is unitary, so
    the above equation is equivalent to the identity in part (2) of the
    proposition.
\end{proof}

Using Proposition \ref{prop:perfectstrat}, we can prove the first part of
Theorem \ref{thm:main}.
\begin{lemma}\label{lem:oneimpliestwo}
    Let $\left( \HH, \ket{\psi}, \bigl\{A_i^{(\ell)}\bigr\},
    \bigl\{B_j\bigr\} \right)$ be a perfect commuting-operator strategy for
    $Mx=b$, and let $\HH_0 = \overline{\mcA \ket{\psi}}$,
    where $\mcA$ is the unital algebra generated by $\{A_i^{(\ell)}\}$, and
    $\mcA \ket{\psi} = \{A \ket{\psi} : A \in \mcA\}$. Finally, let
    $Q_i := A_i^{(\ell)}|_{\HH_0}$ for some $\ell$ with $i \in V_{\ell}$.  Then
    $Q_1,\ldots,Q_n$ is an operator solution for $Mx=b$.
\end{lemma}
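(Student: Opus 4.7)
The plan is to verify the three conditions of Definition \ref{def:operator-solution} directly. Since $\HH_0 = \overline{\mcA\ket{\psi}}$ is by construction invariant under every element of the algebra $\mcA$, each $A_i^{(\ell)}$ restricts to a bounded self-adjoint operator on $\HH_0$, and these restrictions remain involutions. So condition (a) will follow as soon as $Q_i$ is shown to be well-defined (independent of the choice of $\ell$ with $i \in V_\ell$).

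The central technical step is a ``swap to Bob'' identity, proved by induction on word length: for any word
\[
    A = A_{i_1}^{(\ell_1)} A_{i_2}^{(\ell_2)} \cdots A_{i_k}^{(\ell_k)}
\]
in Alice's generators,
\[
    A \ket{\psi} = B_{i_k} B_{i_{k-1}} \cdots B_{i_1} \ket{\psi}.
\]
The base case $k=1$ is Proposition \ref{prop:perfectstrat}(1). For the inductive step I would use Proposition \ref{prop:perfectstrat}(1) to replace $A_{i_k}^{(\ell_k)}\ket{\psi}$ by $B_{i_k}\ket{\psi}$, invoke Definition \ref{def:strat}(b) to move $B_{i_k}$ to the left past all remaining Alice operators, and apply the inductive hypothesis.

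With this identity, well-definedness of $Q_i$ is immediate: for any word $A$ and any valid $\ell$, applying the identity to $A_i^{(\ell)} A$ computes $A_i^{(\ell)} A\ket{\psi}$ as a Bob word ending in $B_i$, which does not depend on $\ell$. Boundedness of $A_i^{(\ell)}$ then extends $A_i^{(\ell)}|_{\HH_0} = A_i^{(\ell')}|_{\HH_0}$ from the dense subspace $\mcA\ket{\psi}$ to all of $\HH_0$. Condition (b) (local compatibility) follows by choosing the same equation $\ell$ for both $Q_i$ and $Q_j$ when $i, j \in V_\ell$ and using that $A_i^{(\ell)} A_j^{(\ell)} = A_j^{(\ell)} A_i^{(\ell)}$ on $\HH$ by Definition \ref{def:strat}(c), hence also on $\HH_0$.

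For constraint satisfaction (c), set $C = A_{k_1}^{(\ell)} \cdots A_{k_r}^{(\ell)}$ where $V_\ell = \{k_1,\ldots,k_r\}$. On a vector $A\ket{\psi} \in \mcA\ket{\psi}$ with $A$ the word above, applying the identity to the concatenated word $CA$ gives
\[
    CA\ket{\psi} = B_{i_k} \cdots B_{i_1} B_{k_r} \cdots B_{k_1} \ket{\psi}.
\]
The factor $B_{k_r} \cdots B_{k_1}\ket{\psi}$ equals $C\ket{\psi}$ by the identity applied to $C$, and that equals $(-1)^{b_\ell}\ket{\psi}$ by Proposition \ref{prop:perfectstrat}(2); repackaging the remaining Bob word back into $A\ket{\psi}$ via the identity yields $CA\ket{\psi} = (-1)^{b_\ell} A\ket{\psi}$, which extends to $\HH_0$ by boundedness of $C$. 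The main obstacle is setting up the swap-to-Bob identity correctly; once it is in place, both the well-definedness of the $Q_i$ and the upgrade of constraint satisfaction from the single vector $\ket{\psi}$ to the full invariant subspace $\HH_0$ are clean corollaries.
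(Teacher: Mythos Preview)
Your proof is correct and rests on the same mechanism as the paper's: the consistency condition $A_i^{(\ell)}\ket\psi = B_i\ket\psi$ from Proposition~\ref{prop:perfectstrat}(1), together with the commutativity of Alice's and Bob's operators, lets one replace Alice-words acting on $\ket\psi$ by Bob-words. The paper organizes this slightly differently. Instead of your explicit swap-to-Bob identity, it notes once that $\mcA\ket\psi = \mcB\ket\psi$ (hence $\HH_0 = \overline{\mcB\ket\psi}$) and then isolates a single transfer principle: whenever $A, A' \in \mcA$ satisfy $A\ket\psi = A'\ket\psi$, one has $A|_{\HH_0} = A'|_{\HH_0}$, because $AB\ket\psi = BA\ket\psi = BA'\ket\psi = A'B\ket\psi$ for every $B \in \mcB$. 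Well-definedness of $Q_i$ and constraint satisfaction then follow by applying this principle to the pairs $\bigl(A_i^{(\ell)}, A_i^{(\ell')}\bigr)$ and $\bigl(\prod_{i\in V_\ell} A_i^{(\ell)},\, (-1)^{b_\ell}\I\bigr)$ respectively. Your version is more hands-on; the paper's is a little shorter because the transfer principle is proved once rather than being unwound word-by-word in each application.
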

\begin{proof}
    Let $\mcB$ be the unital algebra generated by $\{B_j\}$. By Proposition
    \ref{prop:perfectstrat}, we know that $A_i^{(\ell)} \ket\psi = B_i
    \ket\psi$ for all $i \in V_{\ell}$.  Since $\mcA$ and $\mcB$ commute, if follows immediately that
    for every $A \in \mcA$, there is $B \in \mcB$ such that $A \ket\psi =
    B\ket\psi$. In particular, this tells us that $\mcA \ket\psi = \mcB
    \ket\psi$, and consequently that $\HH_0 = \overline{\mcB \ket \psi}$. 

    Now suppose we have $A, A' \in \mcA$ such that $A\ket\psi = A'\ket\psi$.
    Then 
    \begin{equation*}
        A B \ket\psi = B A \ket\psi = B A'\ket\psi = A' B\ket\psi
    \end{equation*}
    for all $B \in \mcB$. By continuity, we conclude that $A|_{\HH_0} =
    A'|_{\HH_0}$. Suppose that variable $x_i$ belongs to equations $\ell$ and
    $\ell'$, or in other words that $i \in V_{\ell} \cap V_{\ell'}$. Then
    \begin{equation*}
        A_i^{(\ell)}\ket\psi = B_i \ket\psi = A_i^{(\ell')} \ket\psi
    \end{equation*}
    by Proposition \ref{prop:perfectstrat} again. We conclude that
    $A_i^{(\ell)}|_{\HH_0} = A_i^{(\ell')}|_{\HH_0}$, and thus $Q_i =
    A_i^{(\ell)}|_{\HH_0}$ is independent of the choice of $\ell$.

    We can now check that $Q_1,\ldots,Q_n$ is an operator solution. Since
    $\HH_0$ is $\mcA$-invariant,
    \begin{equation*}
        Q_i^2 = \bigl(A_i^{(\ell)}\bigr)^2\,|_{\HH_0} = \I_{\HH_0}.
    \end{equation*}
    Similarly, if $i$ and $j$ both belong to $V_{\ell}$, then
    \begin{equation*}
        Q_i Q_j = A_i^{(\ell)} A_j^{(\ell)} |_{\HH_0}
        = A_j^{(\ell)} A_i^{(\ell)}|_{\HH_0} = Q_j Q_i.
    \end{equation*}
    Finally, 
    \begin{equation*}
        \prod_{i \in V_{\ell}} A_i^{(\ell)} \ket\psi = (-1)^{b_{\ell}} \ket\psi
    \end{equation*}
    by Proposition \ref{prop:perfectstrat}, and hence
    \begin{equation*}
        \prod_{i \in V_{\ell}} Q_i = \prod_{i \in V_{\ell}} A_i^{(\ell)} |_{\HH_0}
            = (-1)^{b_{\ell}} \I_{\HH_0}
    \end{equation*}
    for all $1 \leq i \leq \ell$.  
\end{proof}

The second part of Theorem \ref{thm:main} is easy to prove.
\begin{lemma}\label{lem:twoimpliesthree}
    If $Mx=b$ has an operator solution then $J \neq \gi$ in the solution
    group $\Gamma$ of $Mx=b$.
\end{lemma}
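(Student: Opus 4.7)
The plan is to construct a group homomorphism from $\Gamma$ into the unitary group of the Hilbert space $\HH$ carrying the operator solution, sending $J$ to $-\I$. Since $-\I \neq \I$ on any nonzero Hilbert space, this will force $J \neq \gi$ in $\Gamma$.

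Concretely, given an operator solution $A_1,\ldots,A_n$ on $\HH$, I would define a map $\phi$ on the generating set $\{g_1,\ldots,g_n,J\}$ of $\Gamma$ by $\phi(g_i) = A_i$ and $\phi(J) = -\I_{\HH}$. Each $A_i$ is a self-adjoint involution, hence unitary, so $\phi$ takes values in $U(\HH)$. By the universal property of group presentations, $\phi$ extends to a group homomorphism $\Gamma \to U(\HH)$ provided each of the defining relations (a)--(d) of Definition \ref{def:solutiongroup} is satisfied in the image.

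I would then verify these relations one-by-one. Relation (a) follows from $A_i^2 = \I$ (property (a) of an operator solution) and $(-\I)^2 = \I$; relation (b) is immediate because $-\I$ is central; relation (c) is precisely the local compatibility condition (b) in Definition \ref{def:operator-solution}; and relation (d) becomes
\begin{equation*}
    A_1^{M_{\ell,1}} A_2^{M_{\ell,2}} \cdots A_n^{M_{\ell,n}} = (-\I)^{b_\ell} = (-1)^{b_\ell}\I_{\HH},
\end{equation*}
which is exactly constraint satisfaction (property (c) of an operator solution). With all four relations verified, $\phi$ is a well-defined homomorphism.

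Finally, if $J = \gi$ in $\Gamma$, then $\phi(J) = \I_{\HH}$; but $\phi(J) = -\I_{\HH}$, and $-\I_{\HH} \neq \I_{\HH}$ since $\HH$ is nonzero. Hence $J \neq \gi$. There is really no obstacle here: the defining relations of $\Gamma$ were engineered to mirror the operator-solution axioms with $J$ playing the role of the scalar $-1$, so verifying the relations is essentially a tautology. The only point worth flagging is the implicit nontriviality of $\HH$, which is automatic from the definition of Hilbert space.
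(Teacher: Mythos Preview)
Your proof is correct and is exactly the paper's approach: define a representation of $\Gamma$ by sending $g_i \mapsto A_i$ and $J \mapsto -\I$, observe that the defining relations of $\Gamma$ are precisely the operator-solution axioms, and conclude $J \neq \gi$ since $-\I \neq \I$. The only quibble is your claim that nontriviality of $\HH$ is ``automatic from the definition of Hilbert space''; the zero space is a Hilbert space, so one tacitly assumes $\HH \neq \{0\}$ (as the paper does), but this is a harmless convention rather than a gap.
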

\begin{proof}
    Suppose $A_1,\ldots,A_n$ is an operator solution for $Mx=b$. 
    By Definitions \ref{def:operator-solution} and \ref{def:solutiongroup}, 
    the map sending
    \begin{equation*}
        g_i \mapsto A_i,\; 1 \leq i \leq n \quad \text{ and } \quad J \mapsto -\I
    \end{equation*}
    is a representation of $\Gamma$ with $J \neq \I$. It follows that $J \neq
    \gi$ in $\Gamma$. 
\end{proof}

\begin{proof}[Proof of Theorem \ref{thm:main}]
    We have shown in Lemmas \ref{lem:oneimpliestwo} and
    \ref{lem:twoimpliesthree} that (1) implies (2) and (2) implies (3). It
    remains to show that (3) implies (1). Suppose that $J \neq \gi$ in the
    solution group $\Gamma$. We need to construct a perfect commuting-operator
    strategy for $Mx=b$. Let
    \begin{equation*}
        \HH = \left\{\sum_{g\in \Gamma} \alpha_g \ket g : \alpha_g \in \CC \text{ such that } \sum_{g \in \Gamma} 
            |\alpha_g|^2 < \infty\right\}
    \end{equation*}
    be the completion of the group algebra of $\Gamma$. Given $g \in \Gamma$,
    let $L_g$ and $R_g$ be the left and right multiplication operators for $g$
    on $\HH$, so 
    \begin{equation*}
        L_g \ket h = \ket{g h}\quad \text{ and }\quad R_g \ket h = \ket{hg}.
    \end{equation*}
    Clearly, $L_g$ and $R_g$ are unitary. Furthermore, 
    \begin{equation*}
        L_g R_h = R_h L_g, \quad L_g L_h = L_{gh}, \quad \text{ and } \quad R_g R_h = R_{hg}
    \end{equation*}
    for all $h,g \in \Gamma$. We set
    \begin{equation*}
        A_i^{(\ell)} := L_{g_i}\; \text{ for all } 1 \leq \ell \leq m,\; i \in V_{\ell},
    \end{equation*}
    and
    \begin{equation*}
        B_j := R_{g_i}\; \text{ for all } 1 \leq j \leq n.
    \end{equation*}
    Finally we set
    \begin{equation*}
        \ket\psi := \frac{\ket{e} - \ket{J}}{\sqrt{2}}.
    \end{equation*}
    Since $J \neq \gi$ in $\Gamma$, $\ket\psi$ is a well-defined unit vector in
    $\HH$. 
    Since 
    \begin{equation*}
        L_{g_i}^2 = L_{g_i^2} = L_e = \I = R_{g_i}^2 \text{ for all } 1 \leq i \leq n
    \end{equation*}
    and
    \begin{equation*}
       L_{g_i} L_{g_j} = L_{g_i g_j} = L_{g_j g_i} = L_{g_j} L_{g_i} \text{ for all }
            1 \leq \ell \leq m \text{ and } i,j \in V_{\ell},
    \end{equation*}
    it is clear that $\{A_i^{(\ell)}\}$, $\{B_i\}$, and $\ket\psi$ form a valid
    commuting-operator strategy for $Mx=b$. To show that they form a perfect
    strategy, observe that 
    \begin{equation*}
        A_i^{(\ell)} \ket\psi = \frac {\ket{g_i} - \ket{g_i J}}{\sqrt 2}  = \frac {\ket{g_i} - \ket{J g_i}}{\sqrt 2} =          B_i \ket\psi
    \end{equation*}
    for all $1 \leq \ell \leq m$ and $i \in V_{\ell}$, and that
    \begin{equation*}
        \prod_{i\in V_\ell} A_i\ket\psi= \prod_{i\in V_\ell} L_{g_i}\ket\psi= 
L_{\!J^{{\raisebox{1.6pt}{$\scriptscriptstyle b_{\ell}$}}}} \ket\psi, 
    \end{equation*}
    for all $1 \leq \ell \leq m$. If $b_\ell = 0$, then 
    \begin{equation*}
        L_{\!J^{{\raisebox{1.6pt}{$\scriptscriptstyle b_{\ell}$}}}} \ket\psi = L_e \ket\psi = \ket\psi,
    \end{equation*}
    while if $b_\ell = 1$ then
    \begin{equation*}
        L_{\!J^{{\raisebox{1.6pt}{$\scriptscriptstyle b_{\ell}$}}}} \ket\psi = L_{\raisebox{-1.6pt}{$\scriptstyle J$}} \ket\psi = \frac {\ket{J} -\ket{e}}{\sqrt 2} =-\ket\psi.
    \end{equation*}
Therefore, $\prod_{i\in V_\ell} A_i\ket\psi = (-1)^{b_\ell} \ket\psi$ for all 
    $1 \leq \ell \leq m$. Thus the strategy we have constructed is perfect by Proposition
    \ref{prop:perfectstrat}.
\end{proof}

\section{Concluding remarks}

As mentioned in the introduction, we do not know of any computational procedure
which can determine if a binary linear system has a perfect entangled strategy. 
Arkhipov showed that, in the special case where each variable appears in
exactly two constraints, there is a polynomial-time algorithm to determine if a
perfect entangled strategy exists \cite{Arkhipov2012} (in this case, a game has
a perfect commuting-operator strategy if and only if it has a perfect
tensor-product strategy). For the general case, we can attempt to use
the characterization 
of perfect strategies in~\cite{CleveM2012} by searching for operator
solutions over $\CC^d$, $d \in \NN$. It is decidable to determine if there is
an operator solution over $\CC^d$ for fixed $d$, and thus this naive procedure
is guaranteed to find a perfect strategy if one exists. However, if a perfect
strategy does not exist, then the naive procedure does not halt. We note that, for arbitrarily large $d$, Ji gives examples of binary linear systems which have finite-dimensional
operator solutions, but for which the solutions require dimension at least $d$~\cite{Ji2013}.

In contrast, there is no apparent way to search through operator solutions over
infinite-dimensional Hilbert spaces. What we can do instead is try to show that
$J=\gi$ in the group $\Gamma$ by searching through products of the defining
relations. Using our characterization, we see that this procedure will halt if
and only if the linear system game does not have a perfect strategy in the
commuting-operator model. Thus this problem would be decidable if the
tensor-product model and commuting-operator model were equivalent. Determining
whether or not these two models are equivalent is a well-known open problem due
to Tsirelson \cite{Tsirelson1993}. 

As also mentioned in the introduction, the results in this paper generalize to
linear systems over $\ZZ_p$. The non-local game associated to a system over
$\ZZ_p$ is defined in exactly the same way, although Alice and Bob output
assignments from $\ZZ_p$ rather than $\ZZ_2$. Similarly, commuting-operator
strategies are modelled using 
measurements based on unitary operations
$U$ with $U^p = \I$, rather
than $U^2=\I$. Likewise, the definition of the solution group must be changed
so that $g_i^p=\gi$ and $J^p=\gi$. Finally, the state $\ket{\psi}$ in the
proof of Theorem \ref{thm:main} becomes
\begin{equation*}
    \frac{1}{\sqrt{p}} \sum_{i=0}^{p-1} \zeta^{-i} \ket{J^i},
\end{equation*}
where $\zeta$ is a primitive $p$th root of unity. Otherwise all definitions,
propositions, and proofs are the same. 

\section{Acknowledgements}
We would like to thank Vern Paulsen and Zhengfeng Ji for many helpful discussions. This research was supported in part by Canada's NSERC, a David R. Cheriton Scholarship, and a Mike and Ophelia Lazaridis Fellowship.

\bibliographystyle{amsplain}

\bibliography{entanglement}

\end{document}